\newtheorem{theorem}{Theorem}
\newtheorem{proposition}[theorem]{Proposition}
\newtheorem{corollary}[theorem]{Corollary}
\theoremstyle{definition}
\newtheorem{example}[theorem]{Example}
\newcommand{\Z}{{\mathbb Z}}
\newcommand{\C}{{\mathbb C}}
\newcommand{\Aff}{\mathrm{Aff}}
\newcommand{\Cx}{\C^\times}
\newcommand{\la}{\lambda}
\newcommand{\ot}{\otimes}
\newcommand{\tab}[1]{\vcenter{\tableau[sby]{#1}}}
\newcommand{\te}{\tilde{e}}
\newcommand{\tf}{\tilde{f}}
\begin{document}

\title[Set-theoretical reflection equation]{Set-theoretical solutions to the \\
reflection equation associated to the \\
quantum affine algebra of type $\boldsymbol{A^{(1)}_{n-1}}$}

\author{Atsuo Kuniba}
\address{Atsuo Kuniba, Institute of Physics, 
University of Tokyo, Komaba, Tokyo 153-8902, Japan}
\email{atsuo.s.kuniba@gmail.com}

\author{Masato Okado}
\address{Masato Okado, Department of Mathematics, Osaka City University, 
Osaka, 558-8585, Japan}
\email{okado@sci.osaka-cu.ac.jp}

\maketitle

\vspace{0.5cm}
\begin{center}{\bf Abstract}
\end{center}
A trick to obtain a systematic solution 
to the set-theoretical reflection equation is presented from a known one to
the Yang-Baxter equation. Examples are given from crystals and geometric crystals
associated to the quantum affine algebra of type $A^{(1)}_{n-1}$.

\vspace{0.4cm}

\section{Introduction}

Principal features of integrable systems in the bulk and at the boundary
are the Yang-Baxter equation \cite{Bax} 
and the reflection equation \cite{Ch,Kul,Sk}, respectively.
In their original formulation in quantum field theory or statistical mechanics, 
they are cubic and quartic relations of the form $RRR = RRR$ and $RKRK=KRKR$,
where $R$ and $K$ are {\em matrices} encoding the interactions 
in the bulk and at the boundary of the system.
By now extensive results and knowledge on these equations and solutions 
have been accumulated  
especially via their connection to the theory of quantum groups
\cite{D,Ji}.

The Yang-Baxter and reflection equations can also be formulated in a wider context 
where the linear operators $R$ and $K$ are replaced 
by transformations of various kind
such as bijection between sets, birational maps between varieties and so forth.
Such extensions, originally suggested for the Yang-Baxter equation \cite{D2}, 
are often called the {\em set-theoretical} ones.
See for instance \cite{V} for a guide to the set-theoretical $R$'s, and 
\cite{KOY,CCZ,CZ,SVW} 
for some concrete examples of set-theoretical $K$'s.
By definition a set-theoretical solution to the 
reflection equation means a pair of 
set-theoretical $R$ and $K$.

In this paper we present new  
set-theoretical solutions to the reflection equation.
They are most natural and systematic examples 
associated with the Drinfeld-Jimbo quantum affine algebra $U_q(A^{(1)}_{n-1})$. 
There are two versions $(R_{B,B'},K_B)$ and $(\mathcal{R},\mathcal{K})$
which originate in the crystals \cite{Kas} and 
the geometric crystals \cite{BK} related to the Kirillov-Reshetikhin modules
of $U_q(A^{(1)}_{n-1})$, respectively.
For Kirillov-Reshetikhin modules of quantum affine algebras and their
rich background, see e.g. \cite{HKOTT}. 

In the first solution, 
$R_{B,B'}: B \times B' \rightarrow B' \times B$ 
and $K_B: B \rightarrow B^\vee$ are bijections among
the indicated finite sets called {\em crystals}.
Here $^\vee$ denotes the dual whose detail is given in Section 2.
On the other hand, $\mathcal{R}$ and $\mathcal{K}$  
treated in Section 3 are 
birational maps between varieties.
The solution $(R_{B,B'},K_B)$  
can be recovered from $(\mathcal{R},\mathcal{K})$ 
by the procedure called {\em tropicalization} (cf. \cite{SS}) 
or {\em ultra-discretization} 
in another terminology frequently used for  
integrable cellular automata (cf. \cite{IKT, KOY}).

The set-theoretical $R_{B,B'}$ is a
{\em combinatorial $R$-matrix}\footnote{
Although $R_{B,B'},K_B, \mathcal{R},\mathcal{K}$ are
all set-theoretical, we refer to them with ``matrix" 
following the terminology in \cite{KMN1, Fr2}.} in the sense of \cite[Def.4.2.1]{KMN1}
and corresponds to a quantum $R$-matrix at $q=0$.
It preserves the crystal structure (sort of colored oriented graph) 
on $B \times B'$ and $B' \times B$ inheriting
the quantum group symmetry of the relevant quantum $R$-matrices.
It is a highly nontrivial bijection described by an elegant tableau combinatorics \cite{Sh}.
The set-theoretical $\mathcal{R}$ is called 
the {\em geometric $R$-matrix} \cite{Fr2}, which covers a few
special cases known earlier \cite[sec. A.3]{KOY}.
A contribution of this paper is the construction of  
$K_B$ and $\mathcal{K}$ from $R_{B,B'}$ and $\mathcal{R}$, respectively,
in a systematic manner.
As already mentioned, $K_B$ is deducible in principle from $\mathcal{K}$ 
by ultra-discretization.
However this procedure just 
provides the former with a formidably complicated 
piecewise linear formula which can be handled effectively only on computers.
Thus it is worth while to describe $K_B$ independently by a neat
combinatorial algorithm.
This will be done in Section 2 by attributing $K_B$ 
to the combinatorial $R$-matrix $R_{B^\vee, B}$.

Let us digest our approach to the set-theoretical reflection equation 
along the example $(R_{B,B'},K_B)$.
Recall that the Yang-Baxter equation is depicted by the 
world lines of three particles on a plane.
Now we consider the Yang-Baxter equality 
involving two pairs of particles,  hence {\em four} world lines in total,
in the vicinity of a {\em virtual boundary}. 
Arrange two incoming particles within a pair so that their motion 
become mirror image of each other with respect to the boundary.
See Figure \ref{fig1}, where the boundary is depicted by broken lines.
The mirror images are realized by switching to the {\em dual} crystals/tableaux 
signified by the superscript ${}^\vee$.
Then it can be shown that the reflection symmetry 
of the incoming state via ${}^\vee$ 
persists throughout the whole scattering event.
This claim is formulated as 
Proposition \ref{prop:dual} and Corollary \ref{cor:dual}.
The resulting Yang-Baxter equality restricted on either side of the mirror 
is nothing but a set-theoretical reflection equation.
This trick of using the virtual boundary works efficiently   
in a set-theoretical situation like here and \cite{CCZ,CZ}.
However in general, it does not extend naively to the original (quantum) 
version of the reflection equation where $R$ and $K$ are linear operators.

We remark that the three dimensional analogue of the reflection equation
known as the {\em tetrahedron reflection equation} 
$R_{456} R_{489}
 K_{3579} R_{269} R_{258}
 K_{1678} K_{1234}
= K_{1234}
 K_{1678}
 R_{258} R_{269}
 K_{3579} R_{489} R_{456}$ \cite{IK}
also admits a similar triad of quantum, combinatorial and birational versions of 
solutions in which the latter two are set-theoretical \cite[Tab.1]{KO1}.


\section{Kirillov-Reshetikhin crystal for $A^{(1)}_{n-1}$}

We recall the Kirillov-Reshetikhin crystal, KR crystal for short, $B^{k,l}$ for type
$A^{(1)}_{n-1}$, where $1\le k\le n-1,l\ge1$. It is a crystal basis, in the sense of
Kashiwara \cite{Kas}, of a certain finite-dimensional module, called Kirillov-Reshetikhin
module over the quantized enveloping algebra of affine type $A^{(1)}_{n-1}$. 
See e.g. \cite{HKOTT}.
 As a set $B^{k,l}$ consists of semi-standard
tableaux of $k\times l$ rectangular shape with letters from $\{1,2,\ldots,n\}$. 
On $B^{k,l}$ applications of Kashiwara operators $\te_i,\tf_i$ for $0\le i\le n-1$ are
defined \cite{KMN2,Sh}. A few examples follow.

\begin{example} \label{ex:action}
\begin{enumerate}
\item $n=4,B^{1,5}$
\[
\te_2\,\small{\tab{1&2&3&3&4}}=\small{\tab{1&2&2&3&4}},\quad
\tf_0\,\small{\tab{1&2&3&3&4}}=\small{\tab{1&1&2&3&3}}.
\]
\item $n=6,B^{4,1}$
\[
\te_5\,\small{\tab{1\\3\\4\\6}}=\small{\tab{1\\3\\4\\5}},\quad
\tf_3\,\small{\tab{1\\3\\4\\6}}=0\;(\text{some element outside $B^{4,1}$}).
\]
\item $n=6,B^{4,3}$
\[
\te_3\,\small{\tab{1&1&3\\2&2&4\\3&4&5\\5&5&6}}
=\small{\tab{1&1&3\\2&2&4\\3&3&5\\5&5&6}},\quad
\tf_0\,\small{\tab{1&1&3\\2&2&4\\3&4&5\\5&5&6}}
=\small{\tab{1&1&1\\2&2&3\\3&4&4\\5&5&5}}.
\]
\end{enumerate}
\end{example}
For details see e.g. \cite{O:Memoirs}. For an element $b$ of a KR crystal, we set
\[
\varepsilon_i(b)=\max\{k\in\Z_{\ge0}\mid \te_ib\ne0\},\quad
\varphi_i(b)=\max\{k\in\Z_{\ge0}\mid \tf_ib\ne0\}.
\]

Let $B_1,B_2$ be KR crystals. The Cartesian product of $B_1$ and $B_2$ denoted by
$B_1\ot B_2=\{b_1\ot b_2\mid b_1\in B_1,b_2\in B_2\}$ is also endowed with 
the crystal structure by
\begin{align}
\te_i(b_1 \otimes b_2) & := \begin{cases}
\te_i b_1 \otimes b_2 & \text{if } \varepsilon_i(b_1) > \varphi_i(b_2), \\
b_1 \otimes \te_i b_2 & \text{if } \varepsilon_i(b_1) \leqslant \varphi_i(b_2)\,,
\end{cases} \label{eq:e for two factors}
\\ \tf_i(b_1 \otimes b_2) & := \begin{cases}
\tf_i b_1 \otimes b_2 & \text{if } \varepsilon_i(b_1) \geqslant \varphi_i(b_2), \\
b_1 \otimes \tf_i b_2 & \text{if } \varepsilon_i(b_1) < \varphi_i(b_2)\,.
\end{cases} \label{eq:f for two factors} 
\end{align}
We use this convention of the tensor product, which is opposite from
the one in \cite{Kas}.

For KR crystals $B_1,B_2$ there exists a unique bijection 
$R_{B_1,B_2}:B_1\ot B_2\rightarrow B_2\ot B_1$, called the combinatorial $R$-matrix,
that commutes with $\te_i,\tf_i$ for any $i$ \cite{KMN1}.
Uniqueness follows from the fact that $B_1\ot B_2$ is connected, namely, any element
of $B_1\ot B_2$ can be reached from a fixed element by applying $\te_i$'s or $\tf_i$'s. 
Explicitly, the combinatorial $R$-matrix for type $A^{(1)}_{n-1}$ can be calculated
by the so-called tableau product \cite{Fu}. Since the tensor product of two
representations corresponding to rectangular shapes is multiplicity free and the 
operators $\te_i,\tf_i$ ($i\ne0$) commute with the operations to construct the tableau 
product, for 
the tableau product $b_1\cdot b_2$ ($b_1\in B^{k_1,l_1},b_2\in B^{k_2,l_2}$)  there is a
unique pair $(\tilde{b}_2,\tilde{b}_1)\in B^{k_2,l_2}\times B^{k_1,l_1}$ such that
$b_1\cdot b_2=\tilde{b}_2\cdot\tilde{b}_1$. In this case $R(b_1\ot b_2)=\tilde{b}_2
\ot\tilde{b}_1$ \cite{Sh}.
Combinatorial $R$-matrices satisfy the Yang-Baxter equation as a map from
$B_1\ot B_2\ot B_3$ to $B_3\ot B_2\ot B_1$:
\[
(R_{B_2,B_3}\ot1)(1\ot R_{B_1,B_3})(R_{B_1,B_2}\ot 1)
=(1\ot R_{B_1,B_2})(R_{B_1,B_3}\ot1)(1\ot R_{B_2,B_3}).
\]
\begin{example} \label{ex:R}
Let $n=6$, and 
$b_1=\small{\tab{1&3&4\\2&6&6}}\in B^{2,3},
b_2=\small{\tab{1&1&3\\2&2&4\\3&4&5\\5&5&6}}\in B^{4,3}$. 
Then the tableau product is given by
\[
b_1\cdot b_2=\small{\tab{1&1&1&3&4&5\\2&2&2&4\\3&3&5\\4&5&6\\6&6}}.
\]
The only pair of tableaux $(\tilde{b}_2,\tilde{b}_1)\in B^{4,3}\times B^{2,3}$ such that
$b_1\cdot b_2=\tilde{b}_2\cdot\tilde{b}_1$ is given by
$\tilde{b}_2=\small{\tab{1&2&2\\2&3&3\\4&5&5\\6&6&6},
\tilde{b}_1=\tab{1&1&3\\4&4&5}}$. Hence we have 
$R(b_1\ot b_2)=\tilde{b}_2\ot\tilde{b}_1$. See \cite[Example 3.7]{O:Memoirs} and
the explanations above for more details. 
(Note however that the convention for the tensor
product there is opposite.)
\end{example}

The notion of dual crystal is given in \cite[Section 7.4]{Kas2}. Let $B$ be a crystal. 
Then there is a crystal denoted $B^\vee$ obtained from $B$ defined 
by 
$B^\vee=\{b^\vee\mid b\in B\}$ with
\begin{equation} \label{eq:dual crystal}
\te_ib^\vee= (\tf_ib)^\vee, \qquad
\tf_ib^\vee = (\te_ib)^\vee.
\end{equation}
Then there is an isomorphism
$(B_1\otimes B_2)^\vee \cong B_2^\vee \otimes B_1^\vee$ given by
$(b_1\otimes b_2)^\vee \mapsto b_2^\vee \otimes b_1^\vee$.
For a KR crystal $B^{k,l}$ of type $A^{(1)}_{n-1}$ we have $(B^{k,l})^\vee=B^{n-k,l}$
and $b^\vee$ is obtained from $b$ by replacing each column of a rectangular tableau
with its compliment in $\{1,\ldots,n\}$ and reversing the order of the columns.

\begin{example} \label{ex:dual}
\begin{enumerate}
\item When $B=B^{1,l}$ one can use a sequence of nonnegative integers 
$x(b)=(x_1,\ldots,x_n)$ to parametrize a crystal element $b$, where $x_i$ stands for
the number of $i$ in the one-row tableau $b$. Then $b^\vee\in B^{n-1,l}$
is given by the $(n-1)\times l$ tableau such that the number of columns
missing $i$ is $x_i$.
\item Let $n=6$.
For $b\in\small{\tab{1&1&3\\2&2&4\\3&4&5\\5&5&6}}\in B^{4,3}$ we have
$b^\vee=\small{\tab{1&3&4\\2&6&6}}\in B^{2,3}$. $\tf_3 b^\vee$ is given by
$(\te_3 b)^\vee$ where $\te_3 b$ is in Example \ref{ex:action}(3).
\end{enumerate}
\end{example}

\begin{proposition} \label{prop:dual}
If $R_{B_1,B_2}(b_1\ot b_2)=c_2\ot c_1$, then 
$R_{B_2^\vee,B_1^\vee}(b_2^\vee\ot b_1^\vee)=c_1^\vee\ot c_2^\vee$.
\end{proposition}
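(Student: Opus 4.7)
The plan is to invoke the uniqueness characterization of the combinatorial $R$-matrix: $R_{B_2^\vee,B_1^\vee}$ is the \emph{unique} bijection $B_2^\vee\otimes B_1^\vee\to B_1^\vee\otimes B_2^\vee$ commuting with every $\te_i$ and $\tf_i$, so it suffices to exhibit the assignment $b_2^\vee\ot b_1^\vee\mapsto c_1^\vee\ot c_2^\vee$ as such a bijection. Connectedness of $B_2^\vee\ot B_1^\vee$ (needed for the uniqueness to apply) is automatic, since $B_i^\vee$ is again a KR crystal by $(B^{k,l})^\vee=B^{n-k,l}$.

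The first step is to transfer the intertwining property of $R_{B_1,B_2}$ across the duality. Starting from $R_{B_1,B_2}(\te_i x)=\te_i R_{B_1,B_2}(x)$ on $B_1\ot B_2$, applying the $\vee$ operation and using \eqref{eq:dual crystal} yields
\[
\bigl(R_{B_1,B_2}(x)\bigr)^\vee\text{ with }x^\vee\mapsto\tf_i x^\vee \text{ going to }\tf_i\bigl(R_{B_1,B_2}(x)\bigr)^\vee,
\]
so the induced map $\widetilde R\colon (B_1\ot B_2)^\vee\to(B_2\ot B_1)^\vee$ defined by $x^\vee\mapsto (R_{B_1,B_2}(x))^\vee$ is a crystal isomorphism (commuting with both Kashiwara operators, because $\vee$ swaps $\te_i$ and $\tf_i$ on both source and target).

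The second step is to compose with the natural crystal isomorphisms $(B_1\ot B_2)^\vee\cong B_2^\vee\ot B_1^\vee$ and $(B_2\ot B_1)^\vee\cong B_1^\vee\ot B_2^\vee$ recalled in the text. Under these, $(b_1\ot b_2)^\vee\leftrightarrow b_2^\vee\ot b_1^\vee$ and $(c_2\ot c_1)^\vee\leftrightarrow c_1^\vee\ot c_2^\vee$, so $\widetilde R$ becomes a crystal isomorphism $\Psi\colon B_2^\vee\ot B_1^\vee\to B_1^\vee\ot B_2^\vee$ with $\Psi(b_2^\vee\ot b_1^\vee)=c_1^\vee\ot c_2^\vee$. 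Uniqueness then forces $\Psi=R_{B_2^\vee,B_1^\vee}$, which is exactly the claim.

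The only point requiring genuine verification is that the flip $(b_1\ot b_2)^\vee\mapsto b_2^\vee\ot b_1^\vee$ really is a crystal morphism, i.e.\ that the two branches of \eqref{eq:e for two factors}--\eqref{eq:f for two factors} on $B_2^\vee\ot B_1^\vee$ match the corresponding branches on the dual of $B_1\ot B_2$. This reduces to the identities $\varepsilon_i(b^\vee)=\varphi_i(b)$ and $\varphi_i(b^\vee)=\varepsilon_i(b)$, which are immediate from \eqref{eq:dual crystal}; under these the condition $\varepsilon_i(b_1)\gtreqless\varphi_i(b_2)$ governing the action on $b_1\ot b_2$ coincides with the condition $\varphi_i(b_1^\vee)\gtreqless\varepsilon_i(b_2^\vee)$ governing the action on $b_2^\vee\ot b_1^\vee$, as required. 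Beyond this bookkeeping the argument is formal, so no serious obstacle is expected.
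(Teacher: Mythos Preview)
Your proof is correct and follows essentially the same route as the paper: both conjugate $R_{B_1,B_2}$ by the duality $\vee$ to obtain a crystal isomorphism $(B_1\ot B_2)^\vee\to(B_2\ot B_1)^\vee$, compose with the tensor-flip isomorphisms $(B_i\ot B_j)^\vee\cong B_j^\vee\ot B_i^\vee$, and then invoke connectedness of $B_2^\vee\ot B_1^\vee$ to identify the resulting map with $R_{B_2^\vee,B_1^\vee}$. Your explicit check that the flip respects the tensor rules \eqref{eq:e for two factors}--\eqref{eq:f for two factors} via $\varepsilon_i(b^\vee)=\varphi_i(b)$ is a bit more detail than the paper provides (it simply cites the isomorphism $\iota_{i,j}$), but the architecture is identical.
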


\begin{proof}
For any isomorphism of crystals $\psi:B\rightarrow B'$, the map $\psi^\vee$
defined by $\psi^\vee=\vee\circ\psi\circ\vee$ becomes an isomorphism from
$B^\vee$ to $(B')^\vee$. For instance,
\[
\psi^\vee(\te_ib^\vee)=\psi^\vee((\tf_ib)^\vee)=(\psi(\tf_ib))^\vee=(\tf_i\psi(b))^\vee
=\te_i(\psi(b))^\vee=\te_i\psi^\vee(b^\vee).
\]
Combining the isomorphism $\iota_{i,j}:(B_i\ot B_j)^\vee\rightarrow B_j^\vee\ot
B_i^\vee$, one finds
\[
(\iota_{2,1}\circ\vee\circ R_{B_1,B_2}\circ\vee\circ\iota_{1,2}^{-1})
(b_2^\vee\ot b_1^\vee)=c_1^\vee\ot c_2^\vee
\]
also gives an isomorphism. Since $B_2^\vee\ot B_1^\vee$ is connected, this map agrees with
$R_{B_2^\vee,B_1^\vee}$.
\end{proof}

\begin{corollary} \label{cor:dual}
We have:
\begin{enumerate}
\item If $R_{B^\vee,B}(b_1^\vee\ot b_2)=c_2\ot c_1^\vee$, then 
$R_{B^\vee,B}(b_2^\vee\ot b_1)=c_1\ot c_2^\vee$.
\item $R_{B^\vee,B}(b^\vee\ot b)=c\ot c^\vee$.
\end{enumerate}
\end{corollary}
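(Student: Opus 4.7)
The plan is to derive both parts directly from Proposition \ref{prop:dual}, using nothing beyond the involutivity $(b^\vee)^\vee = b$ and $(B^\vee)^\vee = B$, which are immediate from the symmetric definition \eqref{eq:dual crystal}.

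For part (1), I would apply Proposition \ref{prop:dual} with the choices $B_1 := B^\vee$ and $B_2 := B$. The hypothesis $R_{B^\vee,B}(b_1^\vee \ot b_2) = c_2 \ot c_1^\vee$ is then an instance of $R_{B_1,B_2}(x_1 \ot x_2) = y_2 \ot y_1$ under the identification $x_1 = b_1^\vee \in B_1$, $x_2 = b_2 \in B_2$, $y_2 = c_2$, $y_1 = c_1^\vee$. Proposition \ref{prop:dual} then produces the equality $R_{B_2^\vee, B_1^\vee}(x_2^\vee \ot x_1^\vee) = y_1^\vee \ot y_2^\vee$. Substituting $B_2^\vee = B^\vee$, $B_1^\vee = (B^\vee)^\vee = B$, $x_2^\vee = b_2^\vee$, $x_1^\vee = (b_1^\vee)^\vee = b_1$, $y_1^\vee = (c_1^\vee)^\vee = c_1$, and $y_2^\vee = c_2^\vee$ converts this directly into $R_{B^\vee, B}(b_2^\vee \ot b_1) = c_1 \ot c_2^\vee$, which is exactly the desired conclusion.

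For part (2), I would specialize part (1) to $b_1 = b_2 = b$. Writing $R_{B^\vee,B}(b^\vee \ot b) = c_2 \ot c_1^\vee$, part (1) gives simultaneously $R_{B^\vee,B}(b^\vee \ot b) = c_1 \ot c_2^\vee$. Since $R_{B^\vee,B}$ is a well-defined bijection, comparing the two expressions forces $c_1 = c_2$, and setting $c := c_1 = c_2$ gives the assertion.

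There is no genuine obstacle here — once one accepts that $\vee$ is a strict involution on crystals and that the labels in Proposition \ref{prop:dual} may be freely renamed, both parts are formal. The only real care required is bookkeeping: making sure that after the substitution $B_1 = B^\vee$ the ``double dual'' indeed returns the original crystal $B$ and the original elements $b_1, c_1$, so that the identity lands back in the same map $R_{B^\vee,B}$ rather than some different $R_{B,B^\vee}$. I would write the labels out explicitly on a first pass to avoid confusion between $b$ and $b^\vee$ on each side.
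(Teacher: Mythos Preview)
Your argument is correct and is essentially identical to the paper's own proof, which likewise specializes Proposition~\ref{prop:dual} with $B_1=B^\vee$, $B_2=B$ for (1) and then sets $b_1=b_2=b$ to conclude $c_1=c_2$ for (2). You have simply made explicit the involutivity of $\vee$ and the comparison of outputs that the paper leaves implicit.
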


\begin{proof}
For (1) set $B_1=B^\vee,B_2=B,b_1=b_1^\vee,c_1=c_1^\vee$ in Proposition \ref{prop:dual}. 
For (2) set $b_1$ and $b_2$ to be $b$ in (1). Since $c_1=c_2$, we have the desired result.
\end{proof}
Using $b$ and $c^\vee$ in Corollary \ref{cor:dual}(2), we define the combinatorial
$K$-matrix $K_B$ by
\begin{align}\label{bango}
K_B: B \rightarrow B^\vee,\quad b \mapsto c^\vee.
\end{align}

\begin{example} \label{ex:K}
\begin{enumerate}
\item We continue considering an example in Example \ref{ex:dual}(1). Suppose 
$R_{(B^{1,l})^\vee,B^{1,m}}$ sends $b^\vee\ot c$ to $\tilde{c}\ot\tilde{b}^\vee$.
Set $x(b)=(x_i)_{i=1}^n,x(c)=(y_i)_{i=1}^n$. Then we have 
\[
x(\tilde{b})_i=x_i+p_{i+1}-p_i,\quad y(\tilde{c})_i=y_i+p_{i+1}-p_i,
\]
where $p_i=\min(x_i,y_i)$ and the index $i$ should be considered modulo $n$.
See \cite[(2.2)]{KOY}. Note that our convention of the tensor product is opposite
from there. Set $l=m,x_i=y_i$ for all $i$. Then we find $x(\tilde{b})_i=x(\tilde{c})_i=x_{i+1}$.
Hence, $K_{B^{1,l}}(b)=\tilde{b}$ where $x(\tilde{b})_i=x_{i+1}$.
\item We take an example from Example \ref{ex:R}. Namely, we set 
$n=6,B=B^{4,3}$, then $B^\vee=B^{2,3}$. The image of
$\small{\tab{1&3&4\\2&6&6}\ot\tab{1&1&3\\2&2&4\\3&4&5\\5&5&6}}$ by $R$ is 
$\small{\tab{1&2&2\\2&3&3\\4&5&5\\6&6&6}\ot\tab{1&1&3\\4&4&5}}$. Corollary \ref{cor:dual}
is confirmed in this example. Hence, the image of 
$\small{\tab{1&1&3\\2&2&4\\3&4&5\\5&5&6}}$ by $K_B$ is defined to be 
$\small{\tab{1&1&3\\4&4&5}}$.
\end{enumerate}
\end{example}

\begin{theorem} \label{th:main1}
The combinatorial $R$-matrix and $K$-matrix satisfy 
the set-theoretical reflection equation.
\[
R_{B_2^\vee,B_1^\vee}(K_{B_2}\ot1)R_{B_1^\vee,B_2}(K_{B_1}\ot1)
=(K_{B_1}\ot1)R_{B_2^\vee,B_1}(K_{B_2}\ot1)R_{B_1,B_2}
\]
\end{theorem}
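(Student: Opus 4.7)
The plan is to embed both sides of the reflection equation into the $4$-fold tensor $B_2^\vee \otimes B_1^\vee \otimes B_1 \otimes B_2$ and realize each as the output at positions $3$--$4$ of a sequence of combinatorial $R$-matrices acting on the initial element $b_2^\vee \otimes b_1^\vee \otimes b_1 \otimes b_2$. The two central factors form a dual pair $b_1^\vee \otimes b_1$ in the order required by Corollary~\ref{cor:dual}(2), so $K_{B_1}$ can be realized directly as $R_{B_1^\vee, B_1}$ on positions 2--3; a dual pair for $B_2$ at the center will be produced at an intermediate step, with Proposition~\ref{prop:dual} tracking how the ``mirror image'' propagates through each scattering.

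For the LHS of the RE we would apply, in order: $R_{B_1^\vee, B_1}$ on 2--3 (which realizes $K_{B_1}$ by Cor.~\ref{cor:dual}(2)), $R_{B_1^\vee, B_2}$ on 3--4, $R_{B_2^\vee, B_1}$ on 1--2 (whose image is dictated by Prop.~\ref{prop:dual} applied to the preceding step), $R_{B_2^\vee, B_2}$ on 2--3 (which realizes $K_{B_2}$), and $R_{B_2^\vee, B_1^\vee}$ on 3--4; positions 3--4 then carry the LHS of the RE applied to $b_1 \otimes b_2$. For the RHS we would apply $R_{B_1, B_2}$ on 3--4, $R_{B_2^\vee, B_1^\vee}$ on 1--2 (again using Prop.~\ref{prop:dual}), $R_{B_2^\vee, B_2}$ on 2--3 (realizing $K_{B_2}$), $R_{B_2^\vee, B_1}$ on 3--4, $R_{B_1^\vee, B_2}$ on 1--2 (Prop.~\ref{prop:dual}), and $R_{B_1^\vee, B_1}$ on 2--3 (realizing $K_{B_1}$); positions 3--4 then carry the RHS. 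Recording only which adjacent positions each step swaps, the LHS sequence followed by one extra $R_{B_1, B_2}$ on positions 1--2 reads $s_2 s_3 s_1 s_2 s_3 s_1$, while the RHS sequence reads $s_3 s_1 s_2 s_3 s_1 s_2$; both are length-$6$ reduced expressions for the longest element $w_0 = (1\,4)(2\,3)$ of $S_4$.

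Because combinatorial $R$-matrices satisfy the Yang-Baxter equation and commute on disjoint factors, any two reduced expressions for the same element of $S_4$ induce the same composite map (an instance of Matsumoto's theorem in this setting; alternatively, an explicit chain of YBE and disjoint-commutation moves). Equating the two outputs and comparing the last two factors --- which are unaffected by the final $R_{B_1, B_2}$ on 1--2 appended to the LHS sequence --- yields exactly the reflection equation. The main obstacle will be verifying the intermediate states: at each step one must confirm via Prop.~\ref{prop:dual} that the dual partner of each fresh $R$-matrix output ends up at the correct position, so that the two invocations of Cor.~\ref{cor:dual}(2) have the required dual pair at hand.
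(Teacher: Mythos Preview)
Your proposal is correct and follows essentially the same route as the paper: both embed the reflection equation into the four-fold tensor $B_2^\vee\otimes B_1^\vee\otimes B_1\otimes B_2$ acting on $b_2^\vee\otimes b_1^\vee\otimes b_1\otimes b_2$, realize $K_{B_1}$ and $K_{B_2}$ as the central $R$-matrices via Corollary~\ref{cor:dual}(2), propagate the mirror symmetry with Proposition~\ref{prop:dual}, and then equate the two length-six $R$-words by repeated Yang--Baxter moves. Your invocation of Matsumoto's theorem is precisely the paper's ``successive use of the Yang--Baxter equation,'' and your LHS word (with the extra $R_{B_1,B_2}$ appended on positions $1$--$2$) differs from the paper's displayed identity \eqref{proof} only by commutations of disjoint factors.
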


\begin{proof}
Figure \ref{fig1} illustrates the proof. Each side represents the application of each
side of 
\begin{equation} \label{proof}
R_{B_2^\vee,B_1^\vee}^{3,4}R_{B_1,B_2}^{1,2}R_{B_2^\vee,B_2}^{2,3}R_{B_1^\vee,B_2}^{3,4}
R_{B_2^\vee,B_1}^{1,2}R_{B_1^\vee,B_1}^{2,3}
=R_{B_1^\vee,B_1^\vee}^{2,3}R_{B_1^\vee,B_2}^{1,2}R_{B_2^\vee,B_1}^{3,4}
R_{B_2^\vee,B_2}^{2,3}R_{B_2^\vee,B_1^\vee}^{1,2}R_{B_1,B_2}^{3,4}
\end{equation}
on $c^\vee\ot b^\vee\ot b\ot c\in B_2^\vee\ot B_1^\vee\ot B_1\ot B_2$. Here
superscripts in the $R$-matrices stands for the positions of the components on which
it acts. Equation \eqref{proof} itself is verified by successive use of the Yang-Baxter
equation, where the transitions of elements upon applications of $R$-matrices are
depicted in Figure \ref{fig1} using Proposition \ref{prop:dual} and Corollary \ref{cor:dual}.
Eventually, we obtain $c_3=c_6,b_3=b_6,b_3^\vee=b_6^\vee,c_3^\vee=c_6^\vee$.
Just viewing the right parts of both sides of Figure \ref{fig1} and recalling the definition
of the $K$-matrix, we finish the proof of the reflection equation.
\end{proof}

\begin{figure}[h] 
\[
\begin{picture}(300,185)(-32,-5)
\put(0,20){
\multiput(0,-8)(0,4){41}{\line(0,1){2}}
\put(0,0){\put(-30,0){\vector(2,3){95}}\put(30,0){\vector(-2,3){95}}}
\put(0,32){\put(-69,18){\vector(2,1){143}}\put(69,18){\vector(-2,1){143}}}
\put(-76,146){$b_3$}\put(67,146){$b_3^\vee$}
\put(-88,121){$c_3$}\put(78,121){$c_3^\vee$}
\put(-25,102){$c_2$}\put(15,102){$c_2^\vee$}
\put(-39,79){$b_2$}\put(30,79){$b_2^\vee$}
\put(-16,71){$c_1^\vee$}\put(6,71){$c_1$}
\put(-22,52){$b_1$}\put(12,52){$b_1^\vee$}
\put(-82,45){$c^\vee$} \put(-99,40){$B_2^\vee$}
   \put(73,45){$c$} \put(81,40){$B_2$}
\put(-40,-10){$b^\vee$}\put(-57,-17){$B_1^\vee$}
   \put(32,-10){$b$}\put(39,-17){$B_1$}
}
\put(111,85){$=$}
\put(235,20){
\multiput(0,-8)(0,4){41}{\line(0,1){2}}
\put(-60,-1){\vector(2,3){95}}\put(60,-1){\vector(-2,3){95}}
\put(-73,12){\vector(2,1){142}}\put(73,12){\vector(-2,1){142}}
\put(-44,145){$b_6$}  \put(37,145){$b_6^\vee$}
\put(-21,76){$b_5^\vee$} \put(12,76){$b_5$}
\put(-80,86){$c_6$} \put(72,86){$c_6^\vee$}
\put(-13,58.5){$c_5$}\put(4,59){$c_5^\vee$}
\put(-41,46){$b_4^\vee$}\put(31,46){$b_4$}
\put(-18,31){$c_4^\vee$}\put(13,31){$c_4$}
\put(-85,7){$c^\vee$}\put(-104,2){$B_2^\vee$} \put(77,7){$c$}\put(86,2){$B_2$}
\put(-68,-11){$b^\vee$}\put(-86,-17){$B_1^\vee$}
\put(63,-10){$b$}\put(72,-17){$B_1$}
}
\end{picture}
\]
\caption{Proof of the reflection equation.} \label{fig1}
\end{figure}
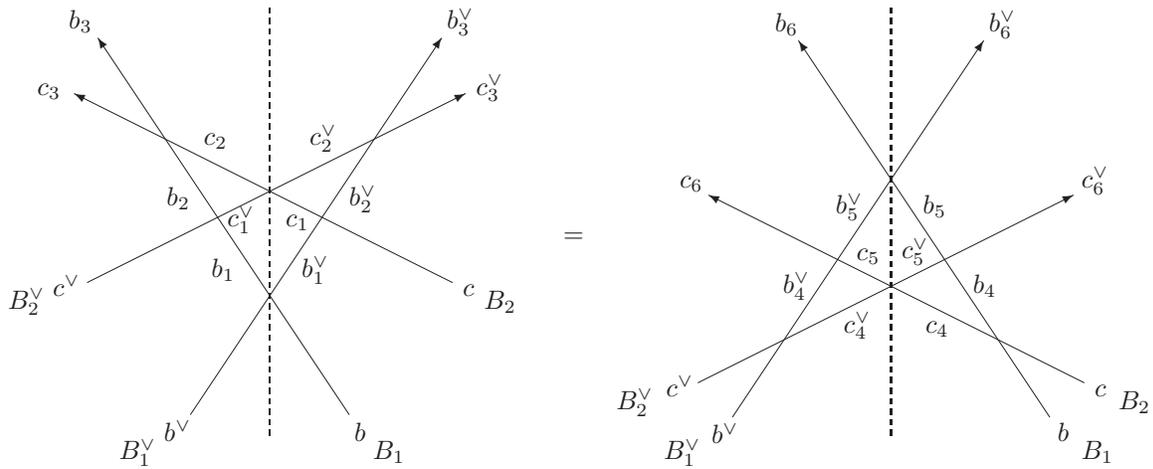

A similar argument has also been given in \cite{CZ}.

\begin{example} \label{ex:reflection}
$n=5$, $B_1 = B^{1,2}, B_2=B^{2,1}$, which implies 
$B_1^\vee = B^{4,2}, B_2^\vee =B^{3,1}$.
\[
\begin{picture}(300,185)(-32,5)
\put(0,20){
\multiput(0,-8)(0,4){43}{\line(0,1){2}}
\put(0,15){\put(0,30){\vector(2,3){65}}\put(0,30){\line(2,-3){30}}}
\put(0,5){\put(0,90){\vector(2,1){70}}\put(0,90){\line(2,-1){70}}}
\put(63,149){$\tiny{\tab{3&5}^\vee}$}
\put(74,129){$\tiny{\tab{4\\5}^\vee}$}
\put(19,115){$\tiny{\tab{3\\4}^\vee}$}
\put(41,92){$\tiny{\tab{5&5}^\vee}$}
\put(6,76){$\tiny{\tab{1\\5}}$}
\put(15,53){$\tiny{\tab{2&5}^\vee}$}
\put(74,55){$\tiny{\tab{1\\2}}$}\put(32,-9){$\tiny{\tab{1&3}}$}
}
\put(115,85){$=$}
\put(160,20){
\multiput(0,-8)(0,4){43}{\line(0,1){2}}
\put(0,15){\put(0,30){\vector(2,1){70}}\put(0,30){\line(2,-1){70}}}
\put(0,0){\put(0,90){\vector(2,3){35}}\put(0,90){\line(2,-3){62}}}
\put(35,149){$\tiny{\tab{3&5}^\vee}$}
\put(3,56){$\tiny{\tab{1\\5}^\vee}$}
\put(74,80){$\tiny{\tab{4\\5}^\vee}$}
\put(14,76){$\tiny{\tab{1&4}}$}
\put(36,43){$\tiny{\tab{1&1}}$}
\put(18,19){$\tiny{\tab{2\\3}}$}
\put(75,7){$\tiny{\tab{1\\2}}$}
\put(65,-11){$\tiny{\tab{1&3}}$}
}
\end{picture}
\]
\end{example}

For a KR crystal $B$ we can define its affinization $\Aff(B)=\{z^db\mid b\in B,d\in\Z\}$.
On $\Aff(B)$ Kashiwara operators act as $\te_i(z^db)=z^{d+\delta_{i0}}(\te_ib),
\tf_i(z^db)=z^{d-\delta_{i0}}(\tf_ib)$. Combinatorial $R_{B_1,B_2}$-matrix is also upgraded by 
introducing the energy function $H$ \cite{KMN1} as
\begin{eqnarray*}
R_{B_1,B_2}:\Aff(B_1)\ot\Aff(B_2)&\longrightarrow&\Aff(B_2)\ot\Aff(B_1)\\
z^{d_1}b_1\ot z^{d_2}b_2&\longmapsto&z^{d_2+H(b_1\ot b_2)}c_2\ot z^{d_1-H(b_1\ot b_2)}c_1.
\end{eqnarray*}
This version of the combinatorial $R$-matrices also satisfy the Yang-Baxter equation. 
Hence, by defining $K_B(z^db)=z^{-d-H(b^\vee\ot b)}c^\vee$ 
when $R_{B^\vee,B}(z^{-d}b^\vee\ot z^d b)=z^{d+H(b^\vee\ot b)}c\ot z^{-d-H(b^\vee\ot b)}
c^\vee$, upgraded combinatorial $R$-matrices and $K$-matrices satisfy the reflection
equation.
This generalizes \cite[sec.2.3]{KOY}.

\section{Geometric crystal for $A_{n-1}^{(1)}$}

Geometric crystal is a notion introduced by Berenstein and Kazhdan in \cite{BK} as an 
algebro-geometric analogue of crystal. 
A geometric crystal of type $A_{n-1}^{(1)}$ is a pentad 
$(X,\gamma,\varphi_i,\varepsilon_i,e_i)$ where $X$ is an irreducible complex algebraic
variety, $\gamma:X\rightarrow (\Cx)^n$ is a rational map, for $i\in\Z/n\Z$
$\varphi_i,\varepsilon_i:X\rightarrow\Cx$ are rational functions and $e_i:\Cx\times X
\rightarrow X$ is a rational action. These data must satisfy further relations.
For instance, denoting the image $e_i(c,x)$ by $e_i^c(x)$, the actions $e_i^{c_1},e_j^{c_2}$
should satisfy $e_i^{c_1}e_j^{c_2}=e_j^{c_2}e_i^{c_1}$ if $|i-j|>1$, 
$e_i^{c_1}e_j^{c_1c_2}e_i^{c_2}=e_j^{c_2}e_i^{c_1c_2}e_j^{c_1}$ if $|i-j|=1$.

For $1\le k\le n-1$, Frieden \cite{Fr1} gave a geometric crystal structure 
of type $A_{n-1}^{(1)}$ on 
$\mathrm{Gr}(n-k,n)\times\Cx$ where $\mathrm{Gr}(m,n)$
is the Grassmannian of $m$-dimensional subspaces in $\C^n$.
Introduce the space of ``rational $k$-rectangle'' by
$\mathbf{T}_k=(\Cx)^{R_k}\times\Cx$
where 
\[
R_k=\{(i,j)\mid 1\le i\le k,i\le j\le i+n-k-1\}.
\]
An element of $\mathbf{T}_k$ is denoted by $(X,s)$. There is an open embedding
of $\mathbf{T}_k$ into $\mathrm{Gr}(n-k,n)\times\Cx$. In the construction
in \cite{Fr2} it is also important that there exists an injection $g$ from $\mathbf{T}_k$ to
$GL_n(\C(\la))$ satisfying
\[
g(e_i^c(X,s))=\hat{x}_i\!\left(\frac{c-1}{\varphi_i(X,s)}\right) g(X,s)\;
\hat{x}_i\!\left(\frac{c^{-1}-1}{\varepsilon_i(X,s)}\right),
\]
where $\hat{x}_i(a)=I+a\la^{-1}E_{i,i+1}$ for $i\in\Z/n\Z$, $I$ is the identity matrix
and $E_{ij}$ stands for the $(i,j)$ matrix unit.

Like the tensor product in crystals, there is a product of 
$(X,s)$ and $(Y,t)$ denoted by $(X,s)\times(Y,t)$ such that $g((X,s)\times(Y,t))=
g(X,s)g(Y,t)$. Thanks to this map $g$, the product $(X,s)\times(Y,t)$ acquires the 
structure of the geometric crystal.

\begin{theorem}[\cite{Fr2}]
There exists a birational map 
$\mathcal{R}:\mathbf{T}_{k_1}\times\mathbf{T}_{k_2}\rightarrow
\mathbf{T}_{k_2}\times\mathbf{T}_{k_1}$ that commutes with $e_i^c$. Moreover, on 
$\mathbf{T}_{k_1}\times\mathbf{T}_{k_2}\times\mathbf{T}_{k_3}$ 
the Yang-Baxter equation
\[
\mathcal{R}_{12}\mathcal{R}_{23}\mathcal{R}_{12}
=\mathcal{R}_{23}\mathcal{R}_{12}\mathcal{R}_{23}
\]
is satisfied. Here $\mathcal{R}_{ij}$ means that $\mathcal{R}$ acts
on the $i$-th and $j$-th components and the other one trivially.
\end{theorem}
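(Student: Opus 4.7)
The plan is to define $\mathcal{R}$ implicitly through the matrix embedding $g$, deduce $e_i^c$-equivariance from the known conjugation formula for $g\circ e_i^c$, and obtain the Yang--Baxter relation from associativity of matrix multiplication combined with the uniqueness of factorizations. First I would declare $\mathcal{R}((X,s),(Y,t))=((Y',t'),(X',s'))$ to be the generically unique solution of
\[
g(X,s)\,g(Y,t) \;=\; g(Y',t')\,g(X',s'), \qquad (Y',t')\in\mathbf{T}_{k_2},\ (X',s')\in\mathbf{T}_{k_1}.
\]
This is well-posed on a Zariski-dense open subset once one knows that the map sending the pair to the matrix product is generically injective with the same image whether one uses shapes $(k_1,k_2)$ or $(k_2,k_1)$; granting that, $\mathcal{R}$ is automatically birational (and involutive when $k_1=k_2$).

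The main obstacle is precisely this existence-and-uniqueness of factorization in $GL_n(\mathbb{C}(\lambda))$. I would attack it via the explicit description of $g(X,s)$ as a product of elementary factors $\hat{x}_i(a)=I+a\lambda^{-1}E_{i,i+1}$: the entries of the product $g(X,s)g(Y,t)$ are Laurent polynomials in $\lambda^{-1}$ whose coefficients, read off through a Gauss-type decomposition of the appropriate submatrices (the minors that cut out the Grassmannian coordinate and the spectral parameter $s$), generically determine $(Y',t')$ and $(X',s')$ by explicit rational formulas. A dimension/degree count along the $\lambda^{-1}$-filtration then verifies that no information is lost, giving both existence and uniqueness.

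Once $\mathcal{R}$ is in place, commutation with $e_i^c$ is formal. Since $g$ is multiplicative on products and
\[
g(e_i^c(Z)) \;=\; \hat{x}_i\!\left(\tfrac{c-1}{\varphi_i(Z)}\right) g(Z)\,\hat{x}_i\!\left(\tfrac{c^{-1}-1}{\varepsilon_i(Z)}\right),
\]
applying $e_i^c$ to the product $(X,s)\times(Y,t)$ multiplies $g(X,s)g(Y,t)$ on the left and right by elementary factors whose arguments depend only on $\varphi_i$ and $\varepsilon_i$ of the product, i.e.\ only on the matrix itself. Thus $g\bigl(e_i^c\circ\mathcal{R}\bigr)$ and $g\bigl(\mathcal{R}\circ e_i^c\bigr)$ are equal in $GL_n(\mathbb{C}(\lambda))$, and uniqueness of the two-factor factorization with prescribed shapes forces the two outputs themselves to coincide.

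Finally, the Yang--Baxter equation reduces to the same uniqueness principle, upgraded to three factors. Applied to $((X_1,s_1),(X_2,s_2),(X_3,s_3))$, both $\mathcal{R}_{12}\mathcal{R}_{23}\mathcal{R}_{12}$ and $\mathcal{R}_{23}\mathcal{R}_{12}\mathcal{R}_{23}$ produce triples whose $g$-image is the reversed product $g(X_3)g(X_2)g(X_1)$, with shapes $(k_3,k_2,k_1)$ in that order; hence it suffices to show that the assignment $\mathbf{T}_{k_3}\times\mathbf{T}_{k_2}\times\mathbf{T}_{k_1}\to GL_n(\mathbb{C}(\lambda))$ is generically injective, which follows by one more application of the two-factor analysis (factor off the outermost piece, then invoke the two-factor uniqueness on the remaining product). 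Genericity is not a concern because birational maps automatically agree on a dense open set once they agree on one.
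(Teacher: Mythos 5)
Note first that the paper does not prove this statement at all: it is imported verbatim from \cite{Fr2} as a known result, so there is no in-paper argument to compare against. Your strategy --- define $\mathcal{R}$ by requiring $g(X,s)g(Y,t)=g(Y',t')g(X',s')$ with the shapes swapped, and deduce both $e_i^c$-equivariance and the Yang--Baxter relation from multiplicativity of $g$ plus uniqueness of factorization --- is in fact the strategy Frieden uses, so the architecture is sound.

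The genuine gap is that you have relegated the entire mathematical content to the sentence beginning ``A dimension/degree count along the $\lambda^{-1}$-filtration then verifies that no information is lost.'' Two separate nontrivial facts are being assumed there. First, \emph{existence}: the locus $\{g(Y',t')g(X',s')\}\subset GL_n(\C(\lambda))$ for shapes $(k_2,k_1)$ is a priori a different subvariety from the one for shapes $(k_1,k_2)$, and showing that they agree on a dense open set is essentially equivalent to constructing $\mathcal{R}$ in the first place; in \cite{Fr2} this is done by an explicit construction of $\mathcal{R}$ in terms of Grassmannian minors, followed by a verification that $g\circ\mathcal{R}=g$, not by a filtration argument. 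Second, \emph{uniqueness}: generic injectivity of $(Z_1,\dots,Z_d)\mapsto g(Z_1)\cdots g(Z_d)$ on a product $\mathbf{T}_{k_1}\times\cdots\times\mathbf{T}_{k_d}$ with a fixed ordering of shapes is a substantial theorem in \cite{Fr2}, and your proposed reduction (``factor off the outermost piece, then invoke the two-factor uniqueness'') presupposes that one can recognize and strip off the outermost factor from the product matrix, which is again the point at issue. A smaller but real issue: your equivariance argument needs $\varepsilon_i$ and $\varphi_i$ of the product $(X,s)\times(Y,t)$ to be functions of the matrix $g(X,s)g(Y,t)$ alone; this holds because the product geometric crystal structure is induced through $g$ from a unipotent crystal structure on $GL_n$, but that is an input you should state rather than treat as obvious. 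As written, the proposal is a correct reduction of the theorem to the two hard lemmas of \cite{Fr2}, not a proof of them.
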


\begin{theorem}[\cite{Fr2}]
There exists a duality map $D:\mathbf{T}_k\rightarrow \mathbf{T}_{n-k}$ satisfying
$e_i^c\circ D=D\circ e_i^{c^{-1}}$ and 
\[
D((X,s)\times(Y,t))=D(Y,t)\times D(X,s).
\]
\end{theorem}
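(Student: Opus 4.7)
The plan is to construct $D$ from the matrix realization $g$ together with a suitable involutive antihomomorphism $\sigma$ of (a subgroup of) $GL_n(\C(\la))$ that carries $g(\mathbf{T}_k)$ onto $g(\mathbf{T}_{n-k})$. Given such a $\sigma$, define
\[
D(X,s):=g^{-1}\bigl(\sigma(g(X,s))\bigr).
\]
Natural candidates to try for $\sigma$ are transposition, antitransposition (conjugation of transpose by the longest-element permutation matrix), inversion $A\mapsto A^{-1}$, and compositions of these possibly twisted by $\la\mapsto\la^{-1}$. One wants the right combination so that, first, $\sigma$ sends each generator $\hat{x}_i(a)$ to $\hat{x}_i(a')$ for some $a'$ compatible with the exchange $c\leftrightarrow c^{-1}$, and second, the induced map on coordinates swaps the rectangles $R_k$ and $R_{n-k}$ underlying Frieden's parametrization.

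Granting the existence of such a $\sigma$, the two asserted properties become essentially formal. Applying $\sigma$ to the recalled identity
\[
g(e_i^c(X,s))=\hat{x}_i\!\left(\tfrac{c-1}{\varphi_i(X,s)}\right)g(X,s)\,\hat{x}_i\!\left(\tfrac{c^{-1}-1}{\varepsilon_i(X,s)}\right),
\]
and using that $\sigma$ reverses products yields a factorization of $g(D(e_i^c(X,s)))$ of the same shape but with the outer factors swapped. Matching it against the defining formula for $g(e_i^{c^{-1}}(D(X,s)))$ forces the identities $\varphi_i\circ D=\varepsilon_i$ and $\varepsilon_i\circ D=\varphi_i$, both of which are then compelled by the uniqueness of such matrix factorizations. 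The product compatibility $D((X,s)\times(Y,t))=D(Y,t)\times D(X,s)$ drops out immediately from $g((X,s)\times(Y,t))=g(X,s)g(Y,t)$ together with the antihomomorphism property of $\sigma$.

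The main obstacle is to pin down an explicit $\sigma$ and to prove that it really sends $g(\mathbf{T}_k)$ into $g(\mathbf{T}_{n-k})$; none of the most naive candidates above satisfies both (a) fixing $\hat{x}_i(a)$ up to the right shift and (b) preserving the image of $g$. In Frieden's setup $g(\mathbf{T}_k)$ is the image of a product of Chevalley-type factors indexed by the rectangle $R_k$, so one must match this factored form, after applying $\sigma$, against the analogous factorization on $R_{n-k}$. The combinatorics of swapping rows and columns of the rectangle is the technical heart of the argument; once carried out, it simultaneously supplies the explicit birational formulas for $D$ on the coordinates $(X,s)$, which by ultra-discretization should recover the combinatorial duality $b\mapsto b^\vee$ used in Section 2 to define $K_B$.
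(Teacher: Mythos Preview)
The paper does not prove this theorem at all: it is quoted from Frieden \cite{Fr2} and stated without proof. The only commentary the paper offers is the sentence immediately following the theorem, namely that $D$ corresponds on the Grassmannian side to taking the orthogonal complement with respect to a certain bilinear form, and that ``$g(D(X,s))$ is closely related to the matrix inverse of $g(X,s)$.'' So there is no in-paper proof to compare against.

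That said, your strategy is consistent with the paper's hint, and the hint actually resolves the very obstacle you flag. You are searching among transposition, antitransposition, inversion, and their twists for an antihomomorphism $\sigma$ with $\sigma(g(\mathbf{T}_k))=g(\mathbf{T}_{n-k})$; the paper is telling you that the correct choice is (essentially) matrix inversion $A\mapsto A^{-1}$, possibly dressed by a diagonal or permutation conjugation and a substitution in $\la$. Inversion is an antihomomorphism, so your formal derivations of the product-reversal $D((X,s)\times(Y,t))=D(Y,t)\times D(X,s)$ and of the intertwining $e_i^c\circ D=D\circ e_i^{c^{-1}}$ go through exactly as you outline, once one checks that $\hat{x}_i(a)^{-1}$ again has the form $\hat{x}_i(\cdot)$ (it does, since $\hat{x}_i(a)=I+a\la^{-1}E_{i,i+1}$ is unipotent) and that $\varphi_i,\varepsilon_i$ swap under $D$. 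The Grassmannian description as orthogonal complement is what guarantees that the image lands in $g(\mathbf{T}_{n-k})$ rather than in some other chart. Your proposal as written stops short of naming this $\sigma$ and verifying the image condition, so it is a correct plan with an explicitly acknowledged gap; the paper's remark points you to how Frieden fills it.
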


In terms of the Grassmannian, the duality map $D$ corresponds to the orthogonal
compliment with respect to a certain bilinear form of the ambient vector space.
In the image of $g$, $g(D(X,s))$ is closely related to the matrix inverse of $g(X,s)$.

From these two theorems, one obtains similar claims to Proposition \ref{prop:dual}
and Corollary \ref{cor:dual}. Hence, by defining $\mathcal{K}:\mathbf{T}_k\rightarrow
\mathbf{T}_{n-k},(X,s)\mapsto(\tilde{X}^\vee,s)$ when 
$\mathcal{R}:\mathbf{T}_{n-k}\times\mathbf{T}_k\rightarrow
\mathbf{T}_k\times\mathbf{T}_{n-k},(X^\vee,s)\times(X,s)\mapsto(\tilde{X},s)\times(\tilde{X}^\vee,s)$
where $D(X,s)=(X^\vee,s)$, one can show

\begin{theorem} \label{th:main2}
The geometric $R$-matrix and $K$-matrix satisfy the 
set-theoretical reflection equation.
\[
\mathcal{R}\mathcal{K}_1\mathcal{R}\mathcal{K}_1
=\mathcal{K}_1\mathcal{R}\mathcal{K}_1\mathcal{R}
\]
\end{theorem}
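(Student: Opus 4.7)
The plan is to imitate, in the birational category, the virtual-boundary strategy used for Theorem \ref{th:main1}. Two ingredients must be transcribed: first, geometric analogues of Proposition \ref{prop:dual} and Corollary \ref{cor:dual} describing how the duality $D$ interacts with the geometric $R$-matrix $\mathcal{R}$; second, the diagrammatic argument of Figure \ref{fig1}, which once those analogues are in place goes through verbatim.

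For the first step, I would show that if $\mathcal{R}((X,s)\times(Y,t)) = (Y',t')\times(X',s')$, then
\[
\mathcal{R}(D(Y,t)\times D(X,s)) = D(X',s')\times D(Y',t').
\]
Both $D\circ\mathcal{R}$ and $\mathcal{R}\circ D$ are birational maps $\mathbf{T}_{k_1}\times\mathbf{T}_{k_2}\to\mathbf{T}_{n-k_1}\times\mathbf{T}_{n-k_2}$; using $D\circ e_i^c = e_i^{c^{-1}}\circ D$ together with the $e_i^c$-equivariance of $\mathcal{R}$, each of them intertwines $e_i^c$ on the source with $e_i^{c^{-1}}$ on the target. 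Rigidity of the $e_i^c$-equivariant birational map with prescribed source and target — the geometric counterpart of the connectedness argument in the proof of Proposition \ref{prop:dual} — then forces the two composites to agree. Specialising $(X,s)=D(X_1,s_1)$ and then collapsing to the diagonal yields the geometric analogue of Corollary \ref{cor:dual}: for each $(X,s)\in\mathbf{T}_k$ there is a birational $\tilde X$ with
\[
\mathcal{R}(D(X,s)\times (X,s)) = (\tilde X,s)\times D(\tilde X,s),
\]
which legitimises the definition $\mathcal{K}(X,s) = D(\tilde X,s) = (\tilde X^\vee,s)$.

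With these analogues in hand, I would apply the diagram of Figure \ref{fig1} verbatim to the symmetric input $D(X,s)\times D(Y,t)\times (Y,t)\times (X,s)$. The two compositions of six $\mathcal{R}$'s corresponding to the two sides of \eqref{proof} coincide by the geometric Yang-Baxter equation, and the $D$-symmetry across the virtual boundary is preserved at every crossing by the corollary just established. Restricting either side to the right half of the boundary and repackaging each diagonal pair $(D(Z),Z)$ as a single application of $\mathcal{K}$ recovers exactly $\mathcal{R}\mathcal{K}_1\mathcal{R}\mathcal{K}_1 = \mathcal{K}_1\mathcal{R}\mathcal{K}_1\mathcal{R}$.

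The hard part will be the rigidity statement underlying step one. In the combinatorial case uniqueness followed immediately from connectedness of $B_1\ot B_2$; in the geometric setting one must either extract such rigidity from the explicit factorised description of $\mathcal{R}$ via the embedding $g$ into $GL_n(\C(\la))$ in \cite{Fr2}, or argue directly that two $e_i^c$-equivariant birational maps with matching target necessarily coincide on a Zariski-open subset. One must also verify that all the birational maps occurring along the six-step Yang-Baxter chain are simultaneously defined on a single dense open subset, so that the computation is not vacuous. Once these technicalities are handled, the reduction to the combinatorial argument is essentially automatic.
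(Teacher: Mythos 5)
Your proposal is correct and takes essentially the same route as the paper: the paper likewise deduces geometric analogues of Proposition~\ref{prop:dual} and Corollary~\ref{cor:dual} from Frieden's two theorems on $\mathcal{R}$ and $D$, and then reruns the Figure~\ref{fig1} argument on the symmetric input. If anything, you are more explicit than the paper about the rigidity/uniqueness step and the domain-of-definition issues, which the paper leaves implicit.
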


In literature, there is a procedure called ultra-discretization among mathematical 
physicists
or tropicalization among mathematicians, turning positive rational maps 
into piecewise linear maps. 
See. e.g.  \cite[sec.4.1]{IKT}.
It is known in \cite{Fr2} that $\mathcal{R}$ and $D$ are
positive maps. Upon use of ultra-discretization Theorem \ref{th:main1} is reconfirmed.

\begin{example} 
In this example we use coordinate variables $(x_{ij})$ ($(i,j)\in R_k$) to represent a point in
$\mathbf{T}_k$. Upon ultra-discretization, $x_{ij}$ counts the number of $j$
in the $i$-th row and $t$ the length of a tableau.
\begin{enumerate}
\item Let $k_1=n-1,k_2=1$. The geometric $R$-matrix $\mathcal{R}:
\mathbf{T}_{n-1}\times\mathbf{T}_1\rightarrow\mathbf{T}_1\times\mathbf{T}_{n-1},(X^\vee,s)\times(Y,t)\mapsto
(\tilde{Y},t)\times(\tilde{X}^\vee,s)$ is given by
\[
\tilde{x}_i=x_i\frac{P_{i+1}}{P_i},\quad \tilde{y}_i=y_i\frac{P_{i+1}}{P_i}\quad\text{where }
P_i=x_i+y_i.
\]
Here $x_i=x_{1i}$ are for $X\in\mathbf{T}_1$, etc. By ultra-discretization: 
$\times\rightarrow+,+\rightarrow\min$, this formula agrees with the one
in Example \ref{ex:K}(1). Setting $x_i=y_i,s=t$, the geometric $K$-matrix for this
case is obtained as $\mathcal{K}(X,s)=(\tilde{X}^\vee,s)$ where $\tilde{X}=(x_{i+1})_i$ 
if $X=(x_i)_i$.
\item Let $n=5,k=2$. In this case $\mathcal{K}(X,s)=(\tilde{X}^\vee,s)$ is given by
the following formulas.
\begin{align*}
\tilde{x}_{11}&=x_{12}Q_1, \quad\tilde{x}_{12}=\frac{x_{13}x_{14}x_{23}}{Q_1Q_2}, 
\quad\tilde{x}_{13}=\frac{x_{11}Q_2}{x_{22}x_{23}}, \\
\tilde{x}_{22}&=Q_1, \quad\tilde{x}_{23}=\frac{x_{14}Q_3}{Q_1Q_2},
\quad\tilde{x}_{24}=\frac{x_{25}}{x_{14}}, \\
Q_1&=x_{13}+x_{24},\quad Q_2=x_{12}+x_{23},\quad Q_3=x_{12}x_{13}+x_{12}x_{24}+x_{23}x_{24}.
\end{align*}
Examples of the combinatorial $K$-matrix for $B^{2,1}$: 
$\small{\tab{1\\5}}\mapsto\small{\tab{3\\4}^\vee},
\small{\tab{2\\3}}\mapsto\small{\tab{1\\5}^\vee}$ 
in Example \ref{ex:reflection} are checked by ultra-discretization of the above formulas.
\end{enumerate}
\end{example}

\section{Discussions}

In \cite{KOY} the case when the KR crystal is $B^{1,l}$ is treated and the 
corresponding solution is denoted by {\tt Rotateleft}. Imitating the construction
there, one can generalize box-ball systems with a boundary using the solution
of the reflection equation obtained in this note. It should also be noted that
the $K$-matrix in this case is the $q\to0$ limit of the one constructed as
an intertwiner of certain coideal subalgebra of $U_q(A_{n-1}^{(1)})$ \cite[\S6.3]{KOY2}.

One might wonder if the trick explained in this note could be applied to 
the case when the $R$-matrix is a linear operator on a certain vector space,
such as an intertwiner of the quantized enveloping algebra 
$U_q(\mathfrak{g})$. 
However, as far as the authors try, a naive extension does not work.

KR crystals have been constructed for all non-exceptional types of quantum 
affine algebras \cite{FOS}. Hence, it is natural to ask whether the technique 
in this note is generalized to other types than $A^{(1)}_{n-1}$. Unfortunately, in those
cases, the dual crystal $B^\vee$ coincides $B$ itself, except 
$B^{n,l}$ and $B^{n-1,l}$ for $D^{(1)}_n$ and $n$ is odd. If $B^\vee=B$, then from Corollary
\ref{cor:dual} (2), we have $R_{B^\vee,B}(b^\vee\ot b)=R_{B,B}(b^\vee\ot b)
=b^\vee\ot b$, since $R_{B,B}$ is the identity. Therefore, $K_B$ also turns
out the identity. So, if either $B_1$ or $B_2$ is self-dual, then the reflection equation 
turns out trivial or reduces to the 
set-theoretical inversion relation $R_{B_2,B_1}R_{B_1,B_2}=\mathrm{id}$.

\section*{Acknowledgments}
The authors thank Akihito Yoneyama for collaboration in their previous works,
Vincent Caudrelier, Rei Inoue, Robert Weston and Yasuhiko Yamada for 
kind interest and comments.
They are supported by Grants-in-Aid for Scientific Research No.~16H03922 from JSPS. 
A.K. is supported by Grants-in-Aid for Scientific Research 
No.~18K03452, 18H01141 and M.O. by No.~19K03426.

\end{document}